\documentclass[12pt]{article}

\usepackage[margin=1in]{geometry}
\usepackage{amsmath,amssymb,amsfonts}
\usepackage{amsthm}
\usepackage{graphicx}
\usepackage{booktabs}
\usepackage[authoryear,round]{natbib}
\usepackage{xcolor}
\usepackage{float}
\usepackage{makecell}
\usepackage{enumitem}
\usepackage{setspace}
\usepackage{xurl}
\usepackage{hyperref}
\hypersetup{hidelinks}
\usepackage{amsmath}
\usepackage{authblk}
\newtheorem{assumption}{Assumption}
\newtheorem{theorem}{Theorem}

\def\cW{\mathcal{W}}
\def\cV{\mathcal{V}}
\def\cR{\mathcal{R}}
\def\cA{\mathcal{A}}

\def\cM{\mathcal{M}}
\def\cY{\mathcal{Y}}

\DeclareMathOperator*{\argmax}{arg\,max}

\def\bQ{\bar{Q}}

\newcommand{\tauv}{\tau^{v}}
\newcommand{\tauvr}{\tau^{v,r}}

\newcommand{\drefv}{d_{\mathrm{ref}}^{v}}
\newcommand{\drefvr}{d_{\mathrm{ref}}^{v,r}}

\newcommand{\doptv}{d_{\mathrm{opt}}^{v}}
\newcommand{\doptvr}{d_{\mathrm{opt}}^{v,r}}
\newcommand{\doptvn}{d_{\mathrm{opt,n}}^{v}}
\newcommand{\doptvrn}{d_{\mathrm{opt,n}}^{v,r}}

\newcommand{\donev}{d_{1}^{v}}
\newcommand{\dtwovr}{d_{2}^{v,r}}

\newcommand{\donevn}{d_{1,n}^{v}}
\newcommand{\dtwovrn}{d_{2,n}^{v,r}}

\newcommand{\donevnminusj}{d_{1,n,j}^{v}}
\newcommand{\dtwovrnminusj}{d_{2,n,j}^{v,r}}

\def\indep{\perp\!\!\!\perp}
\newcommand{\logit}{\mathop{\mathrm{logit}}}

\def\dto{\overset{d}{\to}}

\definecolor{vz}{rgb}{0.0, 0.0, 0.8}

\newcommand{\Val}{\mathsf{Val}}
\newcommand{\Train}{\mathsf{Train}}

\newcommand{\Var}{Var}

\title{
\makebox[\textwidth][c]{%
\begin{minipage}{1.02\textwidth}
\centering
A Causal Inference Approach for Evaluating 
Diagnostic Tests and
AI-Enabled Medical Devices:
From Effect
Modification to
Information-Augmented Decision-Making
\end{minipage}%
}
}

\author[]{Wenxin Zhang}
\author[]{Rachael Phillips}
\author[]{Mark van der Laan \footnote{We would like to extend sincere gratitude to Gene Pennello of the US Food and Drug Administration for his valuable contribution to this project.}
}

\affil[]{Division of Biostatistics, University of California, Berkeley}

\date{}

\begin{document}

\maketitle

% \newpage
\vspace{-2em}
\begin{abstract}
Diagnostic medical tests and devices provide useful information for evaluating the potential benefits and risks of therapeutic treatments. However, unlike treatments, their impact on health outcomes is generally indirect, because measuring diagnostic information typically does not itself affect patient outcomes. This indirect pathway makes it challenging to define and evaluate the clinical utility of medical tests and devices. In this paper, we develop a causal inference approach for evaluating diagnostic tests by distinguishing between explanatory and pragmatic effectiveness. 
Explanatory effectiveness evaluates whether a diagnostic test result provides treatment-relevant information beyond baseline covariates, addressing the scientific question of whether and how the test result explains additional treatment-effect heterogeneity, which we characterize using a variance-based treatment-effect variable importance measure.
Pragmatic effectiveness evaluates whether incorporating this information into personalized treatment decisions improves expected outcomes, a question central to decision-making for patients, clinicians, and other health care stakeholders. We formalize pragmatic effectiveness as a contrast between the expected outcomes of optimal personalized treatment rules defined on different information sets, with or without access to the diagnostic test result. We establish identification of the proposed estimand and provide Targeted Maximum Likelihood Estimation (TMLE) and cross-validated TMLE procedures for statistical estimation and inference without parametric assumptions. Simulation studies and a synthetic colorectal cancer application illustrate the interpretation of the proposed estimands and the estimation performance. 
More broadly, this approach provides a causal inference perspective for understanding and
evaluating AI-enabled devices by distinguishing their dual roles as
information-enrichment tools and personalized decision-optimization tools, clarifying whether AI improves outcomes by expanding the information available
for downstream decisions, by improving the decision rule used to act on that information, or through both pathways.
\end{abstract}

\setstretch{1.5}
\section{Introduction}

Artificial intelligence / machine learning (AI/ML)-enabled medical tools are increasingly being embedded into clinical workflows. Many of these tools have been developed into commercial medical devices or diagnostic tests for a broad range of clinical applications, including, for example, disease detection and diagnosis, patient monitoring and risk prediction, and treatment guidance or dosing \citep{singh2025ai}. 
Many AI/ML-enabled medical devices consist entirely of software and are referred to as Software as a Medical Device (SaMD; see FDA draft guidance, \citeyear{fda2025samd}, not for implementation), e.g., computer-assisted detection or image segmentation \citep{petrick2023regulatory}. 
As AI/ML moves from model development to medical device decision support, central evaluation questions include not only whether the device predicts accurately or performs well against standard benchmarks, but its effect on clinical care \citep{fda2024transparency,shick2024transparency}.
This clinical impact is not only central to patients and clinicians, but may also be important to broader health-care stakeholders, especially payers \citep{schulman2010policy,fda2026cms}.

The core functionality and intent of use of decision support medical tools shape the appropriate evaluation target(s) and the statistical approach needed to generate evidence for clinical utility from, e.g., explanatory clinical trials, pragmatic studies, or real-world data. For action-oriented tools that support treatment selection (or more general management options), evaluation is often anchored to comparisons of the treatment decisions or strategies that the tool directly suggests or assists with. In contrast, the evaluation of information-oriented tools is more challenging, because diagnostic information is not itself a treatment, and measuring such information generally does not directly affect health outcomes.

This challenge is central to the broader evaluation problem for diagnostic medical tests and devices. Like many AI-enabled information-oriented tools, diagnostic tests and devices are intended to measure biomarkers or generate diagnostic information for detecting, diagnosing, and monitoring disease, and for guiding improved healthcare. For example, predictive biomarkers, typically measured by in vitro diagnostic assays or SaMDs, are used to identify clinically relevant sub-groups of individuals whose expected benefit or risk from a medical product or exposure may differ from that of otherwise similar individuals without the biomarker \citep{FDA-NIH2016BEST}. 
As a special case, companion diagnostic tests, which are in vitro diagnostic devices essential for the safe and effective use of a corresponding therapeutic product (\citealp{FDA2014Companion,FDA2020Developing};
see also FDA draft guidance, \citeyear{FDA2016Principles},
not for implementation), may measure biomarker(s) that is (are) predictive of a therapeutic product’s safety or efficacy, often with the intent to use the biomarker test measurement(s) to determine eligibility for that product. 

While existing literature has compared different trial designs for evaluating such medical tests and devices, 
evaluating the utility of diagnostic tests and devices requires careful specification of the target estimand in relation to the scientific or decision-making question of interest \citep{bossuyt2012beyond,pennello2016clinical,zhou2026statistical}.
A common starting point in the diagnostic and biomarker literature is effect modification, that is, whether a test result is predictive of treatment effect. While scientifically important, this perspective does not by itself establish clinical utility. Information that refines treatment effect heterogeneity does not improve outcomes if treatment decisions remain unchanged or if the same treatment is optimal regardless of the test result. 

This distinction supports the development of a causal inference approach for evaluating diagnostic information along two related but distinct dimensions: its ability to explain treatment-effect heterogeneity and its impact on expected outcomes through the downstream treatment decisions it informs. We refer to these dimensions as \emph{explanatory effectiveness} and \emph{pragmatic effectiveness}, respectively. Explanatory effectiveness evaluates whether a diagnostic test result provides relevant treatment-efficacy information beyond a prespecified reference information set, such as baseline covariates. We formalize this notion by comparing conditional average treatment effect functions defined under the reference information set and under the enriched information set that additionally includes the diagnostic test result. This yields a variance-based treatment-effect variable-importance estimand \citep{hines2025variable,li2023targeted,levy2021fundamental} that quantifies the residual variation in treatment-effect heterogeneity explained by the diagnostic result beyond reference information. 
Pragmatic effectiveness evaluates whether using that information in treatment decision-making has an effect on expected outcome.
In this sense, explanatory effectiveness characterizes the information content of a diagnostic signal, whereas pragmatic effectiveness characterizes the outcome value of acting on that information.

To formalize pragmatic effectiveness, we compare the expected outcomes of optimal personalized treatment rules
defined under different information sets, with and without access to the
diagnostic test result, building upon the statistical literature on identifying and optimizing dynamic treatment regimes/rules \citep{robins1986new,murphy2003optimal,chakraborty2014dynamic,van2015targeted,luedtke2016super}. This formulation makes explicit the indirect pathway
from diagnostic information to patient outcomes and allows the evaluation of user-prespecified or data-adaptively learned rules with access to medical test or device information. Methodologically, we establish
identification of the proposed pragmatic effectiveness estimands under standard
causal assumptions and provide a Targeted Maximum Likelihood Estimation (TMLE)
approach for nonparametric estimation and inference. Since the optimal treatment
rules under comparison may themselves be estimated from data, we further provide a cross-validated TMLE
(CV-TMLE) procedure that separates rule learning from value evaluation
\citep{zheng2011cross,van2015targeted,luedtke2016super}. We study finite-sample
performance of the proposed approach through simulations and illustrate the approach in a synthetic dataset in the context of 
a colorectal cancer study.

Importantly, our approach bridges two interconnected roles of AI/ML and presents their distinct pathways to support clinical decision-making: using AI/ML as information-generating tools that augment the information available for treatment decisions, and using AI/ML as treatment-optimization tools that improve how available information is translated into personalized treatment rules, or both. This general methodology may provide a basis for application across areas such as health monitoring, clinical care delivery, personalized medicine, AI-deployment assessment, and broader decision-making contexts in which information augmentation and decision-rule optimization can jointly improve target outcomes.

\section{Statistical setup}
\label{section:setup}

We define the full data structure as
\[
X = \left(W, R, Y(1), Y(0)\right) \sim P_X \in \cM^\mathrm{full},
\]
drawn from a full-data distribution $P_X$ within a nonparametric model $\cM^\mathrm{full}$. Here, $W \in \cW$ denotes baseline covariates, $A \in \cA$ is treatment, $R \in \cR$ represents the underlying test result, and $Y(a) \in \cY$ is the counterfactual outcome under a treatment level $a \in \cA$, with higher values indicating better health outcomes. Our approach allows $W$ and $R$ to be multivariate, and $A$ to be a multi-level categorical variable.
Without loss of generality, we consider a binary treatment $\cA=\{0,1\}$ and a bounded outcome space $\cY=[0,1]$ for the rest of the paper.

The observed data are
$O=(W,R,A,Y)\sim P\in\cM$,
drawn from a distribution $P$ within a nonparametric model $\cM$. 
We suppose the following nonparametric structural causal model (SCM) \citep{pearl2009causality} to generate the observed data:
\begin{align*}
    W,R & = f_{W,R}\left(U_{W,R}\right),         \\
    % T & = f_{T}\left(W, U_{T}\right),       \\
    A & = f_{A}\left(W, R, U_{A}\right), \\
    Y & = f_{Y}\left(W, R, A, U_{Y}\right),
\end{align*}
where $U_{W,R},U_A,U_Y$ are independent exogenous noise variables.

\section{Explanatory effectiveness}
Let \(V=V(W)\in\cV\) denote a prespecified subvector or summary of baseline covariates \(W\). We refer to $V$ as reference information of interest, and to $(V,R)$ as the enriched information set obtained by augmenting $V$ with diagnostic test result $R$. We allow \(V\) to be empty, in which case quantities conditional on \(V\) are interpreted as their marginal counterparts.

Explanatory effectiveness characterizes whether a diagnostic test result $R$ modifies treatment effectiveness beyond $V$. In general, there are multiple ways to summarize such information, including contrasts in average treatment effects across test-defined subgroups or model-based summaries of effect modification. In this paper, we focus on a formulation defined by conditional average treatment effect (CATE), where the goal is to assess whether $R$ further shifts the CATE function beyond $V$.

Specifically, we define the baseline CATE given $V$ as
\[
\tauv(V)
:=
E_{P_X}\!\left[Y(1)-Y(0) \mid V \right].
\]
We define the enriched-information CATE with the diagnostic test result $R$ as
\[
\tauvr(V,R)
:=
E_{P_X}\!\left[ Y(1)-Y(0) \mid V,R \right].
\]
The difference $\tauvr(V,R) - \tauv(V)$ is the discrepancy between the two CATE functions with and without $R$ given $V$.
A natural summary of this discrepancy is the mean squared difference $E\left[
\left\{
\tauvr(V,R)-\tauv(V)
\right\}^{2}
\right]$, which measures the expected squared difference in treatment-effect prediction when the information set is enriched from $V$ to $(V,R)$.
Since
$
\tauv(V)=E[\tauvr(V,R)\mid V]$,
this quantity also admits the equivalent representation
\[
E_{P_X}\left[
\left\{
\tauvr(V,R)-\tauv(V)
\right\}^{2}
\right]= E_{P_X}\left[
\Var\big(\tauvr(V,R)\mid V\big)
\right] := \Psi^{\mathrm{exp,vim}}(P_X),
\]
which corresponds to the variance-based treatment effect variable importance measure (VIM)
\citep{hines2025variable,li2023targeted,levy2021fundamental}, quantifying the average residual variation in CATE explained by the diagnostic test result beyond reference information. A value of zero implies that the test result does not explain additional treatment effect heterogeneity beyond reference information, while positive values indicate that the diagnostic test result modifies the CATE function. Estimation and inference for this VIM can be performed using
estimating-equation estimators \citep{hines2025variable} or targeted
maximum likelihood estimation \citep{li2023targeted}.

\section{Pragmatic effectiveness}

Although explanatory effectiveness provides a causal characterization of whether a diagnostic test result explains treatment-effect heterogeneity, such evidence alone does not imply that use of the diagnostic test or device will improve clinical outcomes. Diagnostic information can be predictive of treatment effectiveness, yet have no pragmatic value if it does not alter the treatment decisions actually made. For example, the treatment strategy based on reference information $V$ alone may assign the same treatment as the rule based on the enriched information set $(V,R)$, or clinicians may already prescribe near-optimal treatments using $V$ without relying on $R$. In these cases, even a diagnostically informative test may fail to improve patient outcomes because the additional information does not translate into a different or better treatment decision.

These considerations motivate evaluating diagnostic tests and devices not only by whether they explain treatment-effect heterogeneity, but also by whether their use improves downstream decision-making and clinical outcomes. We formalize this latter notion through a causal estimand of pragmatic effectiveness.

\subsection{Causal estimand}

We evaluate the pragmatic impact of the diagnostic test by comparing treatment rules that are defined on different information sets: a rule that uses only the reference information and a rule that is additionally allowed to use the diagnostic test result. Throughout, we use the term treatment rule to refer to a map from the available information to a treatment decision. 
This formulation builds on a treatment-rule perspective in the dynamic treatment regime (rule) literature \citep{robins1986new,murphy2003optimal,chakraborty2014dynamic,van2015targeted} and is related to decision-theoretic approaches for evaluating biomarker value in treatment selection \citep{janes2014approach,huang2015characterizing,zhou2021net}.

Let $\donev:\cV \to\cA$ and $\dtwovr:\cV \times\cR\to\cA$
denote a treatment rule based on reference information $V$
alone and a treatment rule based on the enriched information set
\((V,R)\), respectively. 
To define the causal estimand, we consider two modified SCMs:
\[
\begin{array}{l@{\qquad\qquad}l}
\textbf{Modified SCM 1}
&
\textbf{Modified SCM 2}\\[0.5em]
\begin{aligned}
    & W,R = f_{W,R}\left(U_{W,R}\right),\\
    % & T^* = 1,\\
    & A^* = \dtwovr\left(V, R\right),\\
    & Y^*_{\dtwovr} = f_{Y}\left(W, R, A^*, U_{Y}\right),
\end{aligned}
&
\begin{aligned}
    & W,R = f_{W,R}\left(U_{W,R}\right),\\
    % & T^{\dagger} = 1, \\
    & A^{\dagger} = \donev\left(V\right),\\
    & Y^{\dagger}_{\donev} = f_{Y}\left(W, R, A^{\dagger}, U_{Y}\right).
\end{aligned}
\end{array}
\]

For any treatment rule $d$, let $P_d$ denote the post-intervention full-data distribution induced by assigning treatment by rule $d$. We define the value of $d$ as
$
\theta(d)
:=
E_{P_d}\left[Y^d\right]$,
the mean counterfactual outcome under the intervention that assigns
treatment according to $d$. 
Define $P_{\donev}$ and $P_{\dtwovr}$ as the counterfactual distribution of data under the two SCMs, respectively. 
Denote  
$\theta(\dtwovr)
:=
E_{P_{\dtwovr}}\left[Y^{*}_{\dtwovr}\right]$ and $\theta(\donev)
:=
E_{P_{\donev}}\left[Y^{\dagger}_{\donev}\right]$.

The pragmatic effectiveness of the diagnostic test, comparing an
enriched-information rule $\dtwovr$ with a reference-information-only rule
$\donev$, is defined as
\[
\Psi^{\mathrm{prag}}_{\donev,\dtwovr}(P_X)
:=
\theta(\dtwovr)-\theta(\donev).
\]
This estimand measures the change in expected outcome achieved by allowing
individualized treatment rules to use the diagnostic test result $R$ in addition to reference information $V$, relative to assigning treatment according to a rule based on $V$ alone.

\paragraph{Pragmatic Effectiveness under Optimal Treatment Rules.}
While the general contrast \(\Psi^{\mathrm{prag}}_{\donev,\dtwovr}(P_X)\) applies to any pair of rules defined on the reference and enriched information sets, our primary causal estimand quantifies the pragmatic effectiveness of diagnostic information when treatment decisions are optimally personalized under each information set. 
Specifically, let
\[
\doptv
:=
\argmax_{d^v:\cV\to\cA} \,
\theta(d^v),
\qquad
\doptvr
:=
\argmax_{d^{v,r}:\cV \times \cR \to\cA}
\theta(d^{v,r})
\]
denote the optimal treatment rules (OTR) based on the reference information \(V\)
alone and on the enriched information set \((V,R)\), respectively. 
Our primary estimand is then defined as
\[
\Psi^{\mathrm{prag}}_{\doptv,\doptvr}(P_X)
:=
\theta(\doptvr)-\theta(\doptv).
\]
This estimand asks how much the best achievable expected outcome improves when the information available for generating an optimal personalized treatment rule is enriched from \(V\) to \((V,R)\), therefore evaluating the pragmatic value of diagnostic information beyond what is possible with reference information alone.
Because any treatment rule based on \(V\) can also be represented as a rule
based on \((V,R)\) that ignores \(R\), the reference-information rule class is
nested within the enriched-information rule class. 

The identification and estimation results below are developed for a generic pair of rules \((\donev,\dtwovr)\), which includes the primary optimal-rule contrast as the special case \((\donev,\dtwovr)=(\doptv,\doptvr)\). When the rules are estimated from data, we write the learned rules as \((\donevn,\dtwovrn)\), where the subscript $n$ indicates that they are learned from the observed sample of size $n$. For data-adaptive estimation of the contrast of optimal rules, we take \((\donevn,\dtwovrn)=(\doptvn,\doptvrn)\).

\subsection{Identification}
The identification of $\Psi_{\donev,\dtwovr}^{\mathrm{prag}}(P_X)$ from the observed data relies on the following assumptions.

\begin{assumption}[Exchangeability]
\label{assump:exchangeability}
\(\forall a\in\cA\),
$Y(a)\indep A \mid W,R$.
\end{assumption}

\begin{assumption}[Positivity]
\label{assump:positivity}
\(\forall a \in \cA, w \in \cW ,r \in \cR\),
$
P(A=a\mid W=w,R=r)>0$.
\end{assumption}

$\Psi_{\donev,\dtwovr}^{\mathrm{prag}}(P_X)$ can be identified by 
\begin{align*}
\Psi^{\mathrm{prag}}_{\donev,\dtwovr}(P_X)
&= E_{P_{\dtwovr}}\left[Y^{*}_{\dtwovr}\right] -
E_{P_{\donev}}\left[Y^{\dagger}_{\donev}\right] \\
&= E_{W,R} \, E\left[Y^{*}_{\dtwovr} \mid  W,R \right] -
 E_{W,R} \, E\left[Y^{\dagger}_{\donev} \mid  W,R \right] \\
&=  E_{W,R}\left[ E\left[Y \mid A = \dtwovr(V,R), W,R\right] - E\left[Y \mid A = \donev(V), W,R\right]  \right]
\end{align*}
Let $Q_{W,R}$ denote the marginal distribution of $(W,R)$ under $P$,
and let $Q_Y$ denote the conditional distribution of $Y$ given $(A,W,R)$,
with
$
\bQ(a,w,r)
:=
E_P\left[Y \mid A=a,W=w,R=r\right].
$
Define
$
Q(P):=(Q_{W,R},Q_Y).
$
The identified target estimand can
equivalently be written as
\[
\psi_{\donev,\dtwovr}^{\mathrm{prag}}\left(Q(P)\right)
=
\int
\left[
\bQ\left(\dtwovr(v,r),w,r\right)
-
\bQ\left(\donev(v),w,r\right)
\right]
\,dQ_{W,R}(w,r).
\]
We henceforth write the estimand simply as
$\psi_{\donev,\dtwovr}^{\mathrm{prag}}(Q)$.

Since
$
\bQ\left(\dtwovr(V,R),W,R\right)
-
\bQ\left(\donev(V),W,R\right)
=0$ whenever $\dtwovr(V,R)=\donev(V)$, the target estimand can equivalently be written as
\[
\psi_{\donev,\dtwovr}^{\mathrm{prag}}(Q)
=
E_{W,R}\left[
I\left(\dtwovr(V,R)\neq\donev(V)\right)
\left\{
\bQ\left(\dtwovr(V,R),W,R\right)
-
\bQ\left(\donev(V),W,R\right)
\right\}
\right].
\]
Thus, the population-level value contrast is entirely driven by individuals
for whom the two rules assign different treatments. Its magnitude reflects
both the prevalence of such rule discordance and the conditional outcome
contrast between the treatments assigned by the two rules among individuals
for whom they disagree.

\subsection{Estimation}
We follow a targeted learning approach to estimate pragmatic effectiveness
as a contrast between the values of treatment rules, building on Targeted
Maximum Likelihood Estimation (TMLE) and cross-validated TMLE (CV-TMLE) for learning the mean outcome under the optimal treatment rules
\citep{van2006targeted,van2011targeted,van2015targeted,luedtke2016super,van2018targeted}.
The treatment rules under comparison may be specified \textit{a priori} by the user, or learned from the observed data possibly with sample-splitting procedures. When the rules are prespecified
or otherwise treated as fixed, we use TMLE to estimate their pragmatic
effectiveness. When the rules are learned from the same observed data,
particularly for the optimal-rule contrast of primary interest, we use
CV-TMLE to separate rule learning from rule evaluation and mitigate
finite-sample bias from reusing the same data for both tasks.

Because estimation of our primary optimal-rule estimand requires learning the optimal treatment rules with and without the medical test results from data, we focus our theoretical development below on the CV-TMLE procedure,
while presenting the TMLE first to provide methodological context in the simpler
setting with fixed treatment rules of interest.

\paragraph{Efficient influence curve.}
Let $g:\cA\times\cW\times\cR\to[0,1]$ 
denote the treatment mechanism, defined by
$
g(a\mid w,r):=P(A=a\mid W=w,R=r)$. 
Given $\donev$ and $\dtwovr$, the efficient influence curve (EIC) of the pragmatic effectiveness estimand $\psi^{\mathrm{prag}}_{\donev,\dtwovr}(Q)$ is
\begin{eqnarray}
D^{\mathrm{prag}}_{\donev,\dtwovr}(Q,g) &=& \frac{I\left(A=\dtwovr(V, R)\right)-I\left(A=\donev(V)\right)}{g(A \mid W,R)}\left[Y-\bQ(A,W,R)\right] \nonumber \\
&&+ \, \bQ(\dtwovr(V,R),W,R)-\bQ(\donev(V),W,R) - \psi^{\mathrm{prag}}_{\donev,\dtwovr}(Q).
\end{eqnarray}

\paragraph{Targeted Maximum Likelihood Estimation with given rules.}
For a probability distribution $P$ and measurable function $f$, we use
the shorthand
$
Pf:=\int f(o)\,dP(o),
$
and let $P_n$ denote the empirical distribution of the observed sample. 
Let $P_0\in\cM$ denote the
true observed-data distribution. Write
$
Q_0:=Q(P_0),
$
with the corresponding outcome regression 
$
\bQ_0(a,w,r)
:=
E_{P_0}\!\left[Y\mid A=a,W=w,R=r\right],
$ and let
$
g_0(a\mid w,r)
:=
P_0(A=a\mid W=w,R=r).
$
We first consider the target estimand as the pragmatic effectiveness indexed by the given rules 
\(\psi^{\mathrm{prag}}_{\donev,\dtwovr}(Q_0)\). 
We introduce the TMLE procedure following \cite{van2015targeted}.

First, we construct an initial estimator $\bQ_n(a,w,r)$ of the true outcome regression 
$
\bQ_0(a,w,r)
=
E_{P_0}\!\left[Y\mid A=a,W=w,R=r\right],
$
and an estimator $g_n(a\mid w,r)$ of
$
g_0(a\mid w,r).
$

Second, we define the clever covariate
\[
H_n(A,W,R)
:=
\frac{
I\left(A=\dtwovr(V,R))-I(A=\donev(V)\right)
}{
g_n\left(A\mid W,R\right)
}
\]
and consider the logistic fluctuation
submodel for updating the initial fit $\bQ_n$:
\begin{align}
\logit\left\{
\bQ_{n,\epsilon}(a,w,r)
\right\}
&=
\logit\left\{
\bQ_n(a,w,r)
\right\}
+
\epsilon H_n(a,w,r),
\label{eq:tmle_fluctuation_pragmatic}
\end{align}
The fluctuation parameter $\epsilon$ is then estimated by fitting a logistic regression
of \(Y\) on \(H_n(A,W,R)\), with
\(\logit\{\bQ_n(A,W,R)\}\) included as an offset. This targeting step solves the score
equation
$
P_n
\left[
H_n(A,W,R)
\left\{
Y-\bQ_n^*(A,W,R)
\right\}
\right] =
0$.
The updated outcome regression is given by
$\bQ_n^*(a,w,r)
:=
\bQ_{n,\epsilon_n}(a,w,r)$.

Third, let $Q_{n,W,R}$ denote the empirical distribution of $(W,R)$,
and let $Q_{n,Y}^*$ denote the updated conditional distribution of $Y$
corresponding to $\bQ_n^*$. Define
$
Q_n^*:=(Q_{n,W,R},Q_{n,Y}^*).
$
The TMLE of pragmatic effectiveness is then given by the plug-in estimate
\begin{align}
\psi^{\mathrm{prag}}_{\donev,\dtwovr}(Q_n^*)
&=
P_n
\left[
\bQ_n^*
\left\{
\dtwovr(V,R),W,R
\right\}
-
\bQ_n^*
\left\{
\donev(V),W,R
\right\}
\right].
\label{eq:tmle_pragmatic}
\end{align}

Under a strong positivity condition and standard regularity conditions \citep{van2015targeted},
the TMLE is asymptotically normal and efficient.
The standard error of TMLE is given by
$\sigma_n/\sqrt{n}$, where 
$
\sigma_n^2
=
P_n
\left[
D^{\mathrm{prag}}_{\donev,\dtwovr}(Q_n^*,g_n)
^2
\right]$ with $D^{\mathrm{prag}}_{\donev,\dtwovr}(Q_n^*,g_n)$ the estimated EIC given by
\begin{align}
D^{\mathrm{prag}}_{\donev,\dtwovr}(Q_n^*,g_n)
&:=
H_n(A,W,R)
\left[
Y-\bQ_n^*(A,W,R)
\right]
\nonumber\\
&\quad+
\bQ_n^*
\left(
\dtwovr(V,R),W,R
\right)
-
\bQ_n^*
\left(
\donev(V),W,R
\right)
-
\psi^{\mathrm{prag}}_{\donev,\dtwovr}(Q_n^*).
\label{eq:estimated_eic_pragmatic}
\end{align}
The corresponding \(100(1-\alpha)\%\) Wald-type confidence interval is
$\psi^{\mathrm{prag}}_{\donev,\dtwovr}(Q_n^*)
\pm
z_{1-\alpha/2}
\sigma_n/\sqrt{n}$.

The preceding TMLE procedure applies when the treatment rules are prespecified or otherwise treated as fixed. We next consider data-adaptively learned treatment rules and the corresponding CV-TMLE procedure.

\paragraph{Cross-Validated TMLE with (optimal) treatment rules learned adaptively from data.}
Consider data-adaptively learned treatment rules
$(\donevn,\dtwovrn)$. Conditional on the learned rules, the corresponding pragmatic
effectiveness estimand is 
$
\psi^{\mathrm{prag}}_{\donevn,\dtwovrn}(Q_0)$. 
This also applies to our
pragmatic effectiveness of interest with respect to optimal treatment rules, i.e., \(\psi^{\mathrm{prag}}_{\doptvn,\doptvrn}(Q_0)\), where $\doptvn$ and $\doptvrn$ are estimated optimal personalized treatment rules corresponding to the oracle rules  $\doptv$ and $\doptvr$, respectively.
However, since using the same data to both learn the optimal rules and estimate the expected outcome under the optimal treatment rule may induce upward bias in finite samples, we
introduce the general cross-validated TMLE (CV-TMLE) approach \citep{zheng2011cross,van2015targeted} to separate rule learning from rule
evaluation to mitigate such bias for the pragmatic effectiveness with respect to optimal treatment rules.

Let
$
\Val_1,\ldots,\Val_J$
denote a partition of the observed sample into \(J\) mutually exclusive
validation folds of equal size. For each fold \(j\), let
$
\Train_j
:=
\{1,\ldots,n\}\setminus \Val_j$
denote the corresponding training sample,
and let \(P_{n,j}\) denote the empirical distribution over observations
in the validation set \(\Val_j\).
For each fold \(j\), the optimal treatment rules (OTR) are learned using only
the training sample \(\Train_j\). To learn the OTRs, we first estimate conditional average treatment effect (CATE) functions under the reference and enriched information sets. 
These CATE functions are identified from the observed data under the exchangeability and
positivity assumptions above and can be estimated using doubly robust
pseudo-outcomes based on estimators of the outcome
regression and treatment mechanism
\citep{van2006statistical,luedtke2016super}. Specifically, we first construct
$
\bQ_{n,j}
$
to estimate
$
\bQ_0(a,w,r)
=
E_{P_0}(Y\mid A=a,W=w,R=r),
$
and construct
$
g_{n,j}
$
to estimate
$
g_0(a\mid w,r)
=
P_0(A=a\mid W=w,R=r)
$
using the training sample $\Train_j$. Then, we define the doubly robust
pseudo-outcome
\[
\xi_{n,j}(O)
:=
\bQ_{n,j}(1,W,R)-\bQ_{n,j}(0,W,R)
\nonumber + 
\frac{2A-1}{g_{n,j}(A\mid W,R)}
\left[
Y-\bQ_{n,j}(A,W,R)
\right],
\]
whose conditional expectation given $(W,R)$ recovers the CATE
conditional on $(W,R)$ if either \(\bQ_{n,j}=\bQ_0\) or \(g_{n,j}=g_0\).

We then estimate the enriched-information CATE  $\tauvr$ by regressing
\(\xi_{n,j}(O)\) on \((V,R)\). 
Similarly, the baseline-information CATE is estimated by regressing the
same pseudo-outcome on \(V\) alone.
The corresponding fold-specific estimated optimal treatment rules are
\begin{align}
\dtwovrnminusj(v,r)
&:=
I\left(
\tauvr_{n,j}(v,r)>0
\right),
\label{eq:estimated_opt_rule_wr_cate}
\\
\donevnminusj(v)
&:=
I\left(
\tauv_{n,j}(v)>0
\right).
\label{eq:estimated_opt_rule_w_cate}
\end{align}

The target estimand is the average of the fold-specific, data-adaptive
pragmatic effectiveness estimands:
\begin{align}
\tilde{\psi}_{n}^{\mathrm{prag}}(Q)
&:=
\frac{1}{J}
\sum_{j=1}^{J}
\psi^{\mathrm{prag}}_{\donevnminusj,\dtwovrnminusj}(Q).
\label{eq:cvtmle_data_adaptive_target}
\end{align}
Here, \(\donevnminusj\) and \(\dtwovrnminusj\) are learned using only the
corresponding training sample \(\Train_j\), and
\(\psi^{\mathrm{prag}}_{\donevnminusj,\dtwovrnminusj}(Q)\) evaluates the
population-level value contrast for that pair of learned rules. 
The
subscript \(n\) emphasizes that 
\(\tilde{\psi}_{n}^{\mathrm{prag}}(Q)\) is a data-adaptive estimand, depending on the
observed data through the fold-specific rules generated by the sample-splitting procedure.

At the true observed-data distribution, the corresponding
cross-validated data-adaptive target is
$
\tilde{\psi}_{n}^{\mathrm{prag}}(Q_0).
$
We implement the CV-TMLE approach to estimate $\tilde{\psi}_{n}^{\mathrm{prag}}(Q_0)$ as follows.
First, for fold \(j\), define the fold-specific clever covariate
\begin{align}
H_{n,j}(A,W,R)
&:=
\frac{
I\left(A=\dtwovrnminusj(V,R)\right) - I\left(A=\donevnminusj(V)\right)
}{
g_{n,j}\left(A \mid W,R\right)
}
.
\label{eq:cvtmle_clever_covariate_pragmatic}
\end{align}

We then define the fold-specific logistic fluctuation submodel
\begin{align}
\logit\left\{
\bQ_{n,j,\epsilon}(a,w,r)
\right\}
&=
\logit\left\{
\bQ_{n,j}(a,w,r)
\right\}
+
\epsilon H_{n,j}(a,w,r).
\label{eq:cvtmle_fluctuation_pragmatic}
\end{align}
While the fluctuation submodel is fold-specific through
$H_{n,j}$ and $\bQ_{n,j}$, a single fluctuation parameter is estimated
jointly across all folds. Let $j(i)$ denote the validation fold containing
observation $i$. We stack the validation observations across folds and fit
one pooled logistic regression of $Y_i$ on
$H_{n,j(i)}(A_i,W_i,R_i)$, using
$\logit\left\{\bQ_{n,j(i)}(A_i,W_i,R_i)\right\}$ as an offset.
Denote the resulting estimate by $\epsilon_{n,\mathrm{cv}}$.
For each fold $j$, the updated outcome regression is
$
\bQ_{n,j}^{*}(a,w,r)
:=
\bQ_{n,j,\epsilon_{n,\mathrm{cv}}}(a,w,r).$
These updates solve 
$
\frac{1}{J}
\sum_{j=1}^{J}
P_{n,j}
\left[
H_{n,j}(A,W,R)
\left\{
Y-\bQ_{n,j}^{*}(A,W,R)
\right\}
\right]
=
0.$

Define \(Q_{n,j,W,R}\) as the empirical
distribution of \((W,R)\) in each validation fold \(\Val_j\), and
let \(Q_{n,j,Y}^{*}\) denote the updated conditional distribution of
\(Y\) corresponding to \(\bQ_{n,j}^{*}\). Write
$
Q_{n,j}^{*}:=(Q_{n,j,W,R},Q_{n,j,Y}^{*})$.
The fold-specific plug-in estimate of the pragmatic effectiveness is
\begin{align}
\psi^{\mathrm{prag}}_{\donevnminusj,\dtwovrnminusj}(Q_{n,j}^{*})
&=
P_{n,j}
\left[
\bQ_{n,j}^{*}
\left(
\dtwovrnminusj(V,R),W,R
\right)
-
\bQ_{n,j}^{*}
\left(
\donevnminusj(V),W,R
\right)
\right].
\label{eq:cvtmle_fold_pragmatic}
\end{align}

The CV-TMLE is the average
of these fold-specific plug-in estimates,
\begin{align}
\psi_{n,\mathrm{cv-tmle}}^{\mathrm{prag},*}
&:=
\frac{1}{J}
\sum_{j=1}^{J}
\psi^{\mathrm{prag}}_{\donevnminusj,\dtwovrnminusj}(Q_{n,j}^{*}).
\label{eq:cvtmle_pragmatic}
\end{align}

For each validation fold $j$, define the estimated fold-specific efficient
influence curve
$
D_{n,j}^{\mathrm{prag},*}(O)
:=
D_{\donevnminusj,\dtwovrnminusj}^{\mathrm{prag}}
\left(
Q_{n,j}^{*},g_{n,j}
\right)(O)$.
The standard error estimate is 
$\sigma_{n,\mathrm{cv-tmle}}/\sqrt{n}$, where 
$
\sigma_{n,\mathrm{cv-tmle}}^2
:=
\frac{1}{J}
\sum_{j=1}^{J}
P_{n,j}
\left[
D_{n,j}^{\mathrm{prag},*}(O)^2
\right]$.
The \(100(1-\alpha)\%\) Wald-type confidence interval is given by 
$
\psi_{n,\mathrm{cv-tmle}}^{\mathrm{prag},*}
\pm
z_{1-\alpha/2}
\sigma_{n,\mathrm{cv-tmle}}/\sqrt{n}$.

\paragraph{Asymptotic properties of the CV-TMLE.}
We next state sufficient conditions for asymptotic inference
based on the proposed CV-TMLE, following Theorem 6 of \cite{van2015targeted}. A general CV-TMLE representation specialized
to the pragmatic effectiveness estimand, together with verification of the
conditions below, is provided in the Appendix.

\begin{assumption}[Strong positivity]
\label{assump:cvtmle_positivity}
There exists a constant $\delta>0$ such that
\[
P_0
\left[
\inf_{a\in\cA}
g_0
\left(
a\mid W,R
\right)
\geq \delta
\right]
=
1,
\]
and, with probability tending to one,
\[
\inf_{j\in\{1,\ldots,J\}}
\inf_{a\in\cA}
g_{n,j}
\left(
a\mid W,R
\right)
\geq \delta
\qquad
P_0\text{-a.s.}
\]
\end{assumption}

\begin{assumption}[Convergence of the cross-validated efficient influence curve]
\label{assump:cvtmle_eic_convergence}
There exist fixed limiting treatment rules
$d_{1,\infty}^{v}:\cV\to\cA$ and
$d_{2,\infty}^{v,r}:\cV\times\cR\to\cA$,
a possibly misspecified limiting targeted distribution $Q_{\infty}$,
and a limiting treatment mechanism $g_{\infty}$ such that
\[
\max_{j\in\{1,\ldots,J\}}
P_0
\left[
\left\{
D_{n,j}^{\mathrm{prag},*}
-
D_{\infty}^{\mathrm{prag}}
\right\}^{2}
\right]
=
o_p(1),
\]
where
$
D_{\infty}^{\mathrm{prag}}(O)
:=
D_{d_{1,\infty}^{v},d_{2,\infty}^{v,r}}^{\mathrm{prag}}
\left(
Q_{\infty},g_{\infty}
\right)(O)
$.
\end{assumption}

\begin{assumption}[Second-order convergence]
\label{assump:cvtmle_second_order}
The nuisance estimators satisfy
\[
\max_{j\in\{1,\ldots,J\}}
\left\|
\bQ_{n,j}^{*}-\bQ_0
\right\|_{P_0,2}
\left\|
g_{n,j}-g_0
\right\|_{P_0,2}
=
o_p
\left(
n^{-1/2}
\right).
\]
\end{assumption}

\begin{theorem}[Asymptotic normality of the pragmatic-effectiveness CV-TMLE]
\label{thm:cvtmle_pragmatic_an}
Under Assumptions~\ref{assump:cvtmle_positivity}, \ref{assump:cvtmle_eic_convergence} and \ref{assump:cvtmle_second_order},
\[
\psi_{n,\mathrm{cv-tmle}}^{\mathrm{prag},*}
-
\tilde{\psi}_{n}^{\mathrm{prag}}(Q_0)
=
(P_n-P_0)
D_{\infty}^{\mathrm{prag}}
+
o_p
\left(
n^{-1/2}
\right).
\]
Consequently,
\[
\sqrt{n}
\left[
\psi_{n,\mathrm{cv-tmle}}^{\mathrm{prag},*}
-
\tilde{\psi}_{n}^{\mathrm{prag}}(Q_0)
\right]
\dto
N
\left(
0,\sigma_{\infty}^{2}
\right),
\]
where
$
\sigma_{\infty}^{2}
:=
\Var_{P_0}
\left\{
D_{\infty}^{\mathrm{prag}}(O)
\right\}$.

If, in addition,
$
\tilde{\psi}_{n}^{\mathrm{prag}}(Q_0)
-
\psi_{\doptv,\doptvr}^{\mathrm{prag}}(Q_0)
=
o_p
\left(
n^{-1/2}
\right)$,
then 
\[
\sqrt{n}
\left[
\psi_{n,\mathrm{cv-tmle}}^{\mathrm{prag},*}
-
\psi_{\doptv,\doptvr}^{\mathrm{prag}}(Q_0)
\right]
\dto
N
\left(
0,\sigma_{\infty}^{2}
\right).
\]
\end{theorem}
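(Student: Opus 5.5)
The plan follows the standard CV-TMLE argument (Theorem 6 of \cite{van2015targeted}), specialized to the pragmatic-effectiveness estimand. It rests on a fold-wise exact expansion in which the targeting equation cancels the empirical score term, leaving an empirical-process term and a second-order remainder.

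\textbf{Step 1: fold-specific expansion.} Fix a fold $j$ and write $d_{1j}=\donevnminusj$, $d_{2j}=\dtwovrnminusj$ and $\psi_j:=\psi^{\mathrm{prag}}_{d_{1j},d_{2j}}$. Direct computation from the form of $D^{\mathrm{prag}}$ gives the exact identity
\[
\psi_j(Q_{n,j}^*)-\psi_j(Q_0)=-P_0 D^{\mathrm{prag}}_{d_{1j},d_{2j}}(Q_{n,j}^*,g_{n,j})+R_{2,j},
\]
with remainder
\[
R_{2,j}=\sum_{a}P_0\Big[\{I(d_{2j}=a)-I(d_{1j}=a)\}\frac{(g_{n,j}-g_0)(a\mid W,R)}{g_{n,j}(a\mid W,R)}\,(\bQ_{n,j}^*-\bQ_0)(a,W,R)\Big].
\]
To verify it, condition on $(W,R)$ and integrate the clever-covariate term over $A$ and $Y$. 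Here $P_0$ integrates only over $O$, with the fold-$j$ training objects held fixed. Because $\psi_j(Q_{n,j}^*)$ uses the validation-fold empirical distribution of $(W,R)$, the mean-zero part of $D_{n,j}^{\mathrm{prag},*}$ also satisfies $P_{n,j}D_{n,j}^{\mathrm{prag},*}=P_{n,j}[H_{n,j}(Y-\bQ^*_{n,j})]$. Combining these gives
\[
\psi_j(Q_{n,j}^*)-\psi_j(Q_0)=(P_{n,j}-P_0)D_{n,j}^{\mathrm{prag},*}-P_{n,j}[H_{n,j}(Y-\bQ^*_{n,j})]+R_{2,j}.
\]

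\textbf{Step 2: average over folds and invoke the targeting.} The pooled fluctuation solves $\frac1J\sum_j P_{n,j}[H_{n,j}(Y-\bQ_{n,j}^*)]=0$. Averaging the fold identities therefore gives
\[
\psi_{n,\mathrm{cv-tmle}}^{\mathrm{prag},*}-\tilde\psi_n^{\mathrm{prag}}(Q_0)=\frac1J\sum_j(P_{n,j}-P_0)D_{n,j}^{\mathrm{prag},*}+\frac1J\sum_j R_{2,j}.
\]

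\textbf{Step 3: the remainder.} By Assumption~\ref{assump:cvtmle_positivity}, $|I(d_{2j}=a)-I(d_{1j}=a)|\le 1$ and $g_{n,j}\ge\delta$. Cauchy--Schwarz then gives
\[
|R_{2,j}|\lesssim \delta^{-1}\|\bQ_{n,j}^*-\bQ_0\|_{P_0,2}\,\|g_{n,j}-g_0\|_{P_0,2},
\]
which is $o_p(n^{-1/2})$ by Assumption~\ref{assump:cvtmle_second_order}.

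\textbf{Step 4: the empirical-process term (main obstacle).} This is where the subtlety lies, because $\bQ^*_{n,j}$ depends on $\epsilon_{n,\mathrm{cv}}$, which is fitted on all validation data, including fold $j$. Conditional independence therefore fails exactly, and I handle it as follows.
\begin{itemize}
\item Write $D_{n,j}^{\mathrm{prag},*}=D_{j}(\epsilon_{n,\mathrm{cv}})$, where $D_j(\epsilon)$ is built from training-sample objects only. The collection $\{D_j(\epsilon):\epsilon\in[-M,M]\}$ is a one-dimensional Lipschitz family: the fluctuation is smooth in $\epsilon$, and $|H_{n,j}|\le 2/\delta$.
\item Conditional on $\Train_j$, this class has a bracketing entropy integral bounded uniformly. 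A maximal inequality for the validation-fold empirical process (with $|\Val_j|\asymp n/J$) then gives
\[
\sup_{|\epsilon|\le M}\big|(P_{n,j}-P_0)\{D_j(\epsilon)-D_\infty^{\mathrm{prag}}\}\big|=o_p(n^{-1/2}),
\]
provided the $L^2(P_0)$ diameter tends to zero. Assumption~\ref{assump:cvtmle_eic_convergence} supplies this along $\epsilon_{n,\mathrm{cv}}$. Near it, the Lipschitz property together with consistency of $\epsilon_{n,\mathrm{cv}}$ (or restricting to a shrinking neighbourhood) handles the supremum.
\item One also needs $\epsilon_{n,\mathrm{cv}}=O_p(1)$. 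This follows from strong positivity and the logistic score equation, as in \cite{zheng2011cross}.
\end{itemize}
It follows that
\[
\frac1J\sum_j(P_{n,j}-P_0)D_{n,j}^{\mathrm{prag},*}=\frac1J\sum_j(P_{n,j}-P_0)D_\infty^{\mathrm{prag}}+o_p(n^{-1/2})=(P_n-P_0)D_\infty^{\mathrm{prag}}+o_p(n^{-1/2}),
\]
using equal fold sizes for the last equality. This proves the asymptotic linear expansion.

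\textbf{Step 5: conclusions.} $D_\infty^{\mathrm{prag}}$ is a fixed function, bounded by strong positivity and $\cY=[0,1]$. The central limit theorem and Slutsky's lemma therefore give
\[
\sqrt n\big[\psi_{n,\mathrm{cv-tmle}}^{\mathrm{prag},*}-\tilde\psi_n^{\mathrm{prag}}(Q_0)\big]\dto N(0,\sigma_\infty^2).
\]
Under the additional assumption, the difference between $\tilde\psi_n^{\mathrm{prag}}(Q_0)$ and $\psi^{\mathrm{prag}}_{\doptv,\doptvr}(Q_0)$ is $o_p(n^{-1/2})$. Adding it changes nothing at the $\sqrt n$ scale, which yields the final claim by Slutsky's lemma.
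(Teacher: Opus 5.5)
Your proposal is correct and follows the same route as the paper. Both use the fold-wise exact expansion with the explicit doubly robust remainder, cancel $\frac{1}{J}\sum_j P_{n,j}D^{\mathrm{prag},*}_{n,j}$ by the pooled targeting step, bound the remainder by Cauchy--Schwarz under strong positivity, and apply the CLT to $(P_n-P_0)D^{\mathrm{prag}}_{\infty}$. The one difference is your Step 4: it spells out, via the one-parameter fluctuation class and a localized maximal inequality, the cross-validated empirical-process argument that the paper only cites from \citet{van2015targeted}. There, the fold-dependent centering constant $\psi^{\mathrm{prag}}_{\donevnminusj,\dtwovrnminusj}(Q^{*}_{n,j})$ inside $D^{\mathrm{prag},*}_{n,j}$ is harmless, because $P_{n,j}-P_0$ annihilates constants.
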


\section{Simulation study}
We conduct a simulation study to evaluate the finite-sample performance
of the TMLE-based approaches for assessing the explanatory effectiveness and the pragmatic effectiveness under
learned optimal treatment rules. 

We consider two baseline covariates
$W=(W_1,W_2)$,
where
$
W_1\sim \mathrm{Bernoulli}(0.5)$,
$W_2\sim \mathrm{Uniform}(-3,3)$.
The diagnostic test result \(R\in\{-1,1\}\) was generated by
$
P(R=1\mid W)
=
\min\left\{
\max\left\{
1/[1+\exp(-(0.4W_1+0.3W_2))],
\,0.2
\right\},
\,0.8
\right\}.
$
Treatment was assigned according to
$
P(A=1\mid W,R) = \min\left\{
\max\left\{1/[1+\mathrm{exp}(-(0.35W_1+0.55R))],0.2
\right\},
\,0.8
\right\}.
$ 
The conditional mean outcome was
$
E(Y\mid W,R,A) = W_1+1.5 W_2 \times A+2.5R \times A$.
The outcome $Y$ is generated by adding Gaussian noise ($\sim N(0,1)$) and 
is bounded between -10 and 10. 
We consider four sample sizes, $n\in\{250,500,1000,2000\}$,
and run 500 Monte Carlo simulations for each sample size. The TMLE approach is implemented by rescaling the outcomes to lie between 0 and 1 using the sample minimum and maximum, and then transforming the estimates back to the original scale, following \cite{gruber2010targeted}. 

We first report the explanatory effectiveness of \(R\) under this data-generating distribution using the variable importance measure (VIM) described above, where we set $V=(W_1,W_2)$. Based on the simulation setup, the true VIM-based explanatory effectiveness of \(R\) is 5.77, while the corresponding VIMs for the baseline covariates \(W_1\) and \(W_2\) are 0 and 6.3, respectively. These values indicate that \(R\) has substantial explanatory effectiveness in modifying the conditional average treatment effect. We also implement TMLE of VIM following the estimation procedures described by \citet{li2023targeted}; the estimator exhibits low bias, decreasing variance with increasing sample size, and empirical coverage close to the nominal level (Table~\ref{tab:simulation_exp}). 
\begin{table}[H]
\centering
\caption{Simulation results of TMLE for explanatory effectiveness of $R$ using VIM}
\label{tab:simulation_exp}
\begin{tabular}{cccccc}
\toprule
\(n\) &
Truth &
Bias &
Variance &
Coverage \\
\midrule
250  & 5.77 & 0.092 & 0.476 & 0.940 \\
500  & 5.77 & 0.100 & 0.215 & 0.948 \\
1000 & 5.77 & 0.052 & 0.114 & 0.940 \\
2000 & 5.77 & 0.038 & 0.049 & 0.950 \\
\bottomrule
\end{tabular}
\end{table}

We then evaluate our proposed estimand of pragmatic effectiveness by assessing whether incorporating the diagnostic test result $R$ into the information set $V=(W_1,W_2)$ to estimate the optimal personalized treatment rule $\doptvr$ improves the expected outcomes, compared with the optimal personalized treatment rule $\doptv$ using baseline covariates $W_1$ and $W_2$ only.
Specifically, for each simulation dataset, we use sample splitting to define the data-adaptive target estimand and implement CV-TMLE to obtain point estimates and conduct statistical inference with 95\% Wald-type confidence intervals. For estimation of outcome regression and CATE
functions, we use Super Learner \citep{van2007super} with a library consisting of a linear model without interaction terms, an intercept-only
model, and random forest \citep{breiman2001random}. For estimation of the treatment assignment mechanism, we use
Super Learner with a library of a logistic
regression without interaction terms, an intercept-only model and random forest. 

Table~\ref{tab:simulation_prag} summarizes the performance of the
CV-TMLE for pragmatic effectiveness.
The average value of the target estimand across simulations is approximately 0.314 for all sample sizes. The corresponding mean outcomes under the learned reference-information and enriched-information optimal treatment rules are approximately 1.71--1.77 and 2.02--2.08, respectively, while the average discordance rate between the estimated rules is around 25\%.
The bias relative to the true estimand decreases toward zero as the sample size increases. The variance also decreases with increasing sample size. Empirical coverage
shows finite-sample undercoverage at the smallest sample size \(n=250\) and approaches the
nominal 95\% level for $n \in \{500,1000,2000\}$.  

\begin{table}[H]
\centering
\caption{Simulation results of CV-TMLE for pragmatic effectiveness of $R$ in addition to $W_1$ and $W_2$. Target, Est., Bias, Var., Cov., and Disc. denote the Monte Carlo mean of the pragmatic estimand, mean estimate, mean bias, empirical variance, empirical coverage, and mean discordance rate between estimated  \(d_{\mathrm{opt}}^{v,r}\) and \(d_{\mathrm{opt}}^v\), respectively. The last two columns report the mean estimated values of the two learned rules.}
\label{tab:simulation_prag}
\vspace{1em}
\begin{tabular}{ccccccccc}
\toprule
\(n\) & Target & Mean Est. & Bias & Var. & Cov. & Disc. & \(\theta(\doptv)\) & \(\theta(\doptvr)\) \\
\midrule
250  & 0.314 & 0.310 & -0.0042 & 0.0072 & 0.914 & 0.251 & 1.713 & 2.022\\
500  & 0.314 & 0.316 & 0.0021 & 0.0036 & 0.944 & 0.248 & 1.744 & 2.061\\
1000 & 0.314 & 0.315 & 0.0005 & 0.0018 & 0.948 & 0.244 & 1.762 & 2.077\\
2000 & 0.314 & 0.314 & -0.0001 & 0.0007 & 0.964 & 0.241 & 1.767 & 2.081\\
\bottomrule
\end{tabular}

\end{table}

\section{Data application}
We applied the explanatory and pragmatic effectiveness estimators to a synthetic colorectal cancer dataset that mimics the context of the PRIME randomized phase III trial, comparing the effectiveness of panitumumab plus FOLFOX4 versus FOLFOX4 alone for first-line treatment of metastatic colorectal cancer \citep{douillard2014final}. In PRIME, patients were randomized to panitumumab plus FOLFOX4 or FOLFOX4 alone, and KRAS mutation status (wild-type or mutant) was evaluated as a predictive biomarker for anti-EGFR therapy. 

Our analysis applies the proposed methods to a synthetic dataset constructed to resemble the PRIME setting, which may be interpreted as an illustrative PRIME-like application rather than as an analysis or reproduction of the original trial results. We coded the treatment as \(A=1\) for panitumumab plus FOLFOX4 and \(A=0\) for FOLFOX4 alone. We used KRAS as the diagnostic test result of interest, encoding \(R=0\) for KRAS wild-type and \(R=1\) for KRAS mutant. We restricted the primary analyses to uncensored observations ($n=$ 603) and treated the observed survival (OS) time as a continuous outcome for this illustrative example, with larger values indicating better outcomes. 

We considered two reference information sets $V$ to estimate $\doptv$ and $\doptvr$. The first set $V_1 = \emptyset$ uses no baseline covariates in the rule class, so that the enriched rule $\doptvr$ may depend only on KRAS status $R$. This setting evaluates the value of biomarker-guided treatment decisions relative to a single marginal treatment recommendation. The second set, \(V_2\), includes age, region, and an Eastern Cooperative Oncology Group (ECOG) performance status with possible discrete values in $\{0,1,2\}$, where region and ECOG correspond to randomization stratification factors and age is an additional clinically relevant baseline characteristic in the PRIME study.
Separately, nuisance functions \(E(Y\mid A,W,R)\) and \(P(A=1\mid W,R)\) were fitted using the richer available baseline covariate set $W$, consisting of region, age, sex, race, metastatic disease indicators, liver-only metastasis, lactate dehydrogenase, ECOG performance status, and primary tumor location, with the same Super
Learner library as in the simulation study.

\paragraph{Explanatory effectiveness.}
We first report the explanatory effectiveness of KRAS status based on VIM (Table~\ref{tab:application_vim}), using $V_1$ and $V_2$ as baseline covariate subsets of interest, respectively.
When no baseline covariates were included ($V_1 =\emptyset$), KRAS status shows positive VIM for OS. Using the covariate set $V_2$, the estimated explanatory effectiveness of KRAS status was substantially larger, which suggests that KRAS status carries treatment-relevant information especially after conditioning on other baseline covariates in this synthetic dataset.

\begin{table}[ht]
\centering
\caption{Explanatory effectiveness of KRAS status $R$ in the synthetic PRIME study with different baseline covariates $V_1$ and $V_2$.}
\label{tab:application_vim}
\vspace{1em}
\begin{tabular}{ccccc}
\toprule
Baseline Covariate Set & Estimate & SE & 95\% CI \\
\midrule
 \(V_1 = \emptyset\)  & 2673.59 & 1096.64 & (524.18, 4823.00) \\
 \(V_2 = \{\mathrm{region, age, ECOG}\}\)  &  14671.14 & 2138.11 & (10480.45, 18861.83) \\
\bottomrule
\end{tabular}
\end{table}

\paragraph{Pragmatic effectiveness.}
We next estimated pragmatic effectiveness using CV-TMLE. Table~\ref{tab:application_prag} reports the estimated pragmatic effectiveness in two scenarios that use $V_1$ and $V_2$, to construct the baseline OTR $\doptv$, respectively.  

In the setting with \(V_1 = \emptyset\), the estimated pragmatic effectiveness for OS was 12.61 days, with the KRAS-informed optimal treatment rule changing treatment recommendations for approximately \(52\%\) of observations. When the reference information set was enlarged to \(V_2\), the estimated pragmatic effectiveness decreased to 8.21 days, and the discordance rate decreased to approximately \(20.9\%\). This reduction reflects the smaller incremental role of KRAS status when a richer reference information set is already available for treatment personalization, leaving less room for KRAS status to further alter optimal treatment recommendations. While both point estimates were positive, their 95\% confidence intervals
included zero and were therefore not statistically distinguishable from zero in this setting.

\begin{table}[ht]
\centering
\caption{Estimate of pragmatic effectiveness $\psi_{n,\mathrm{cv-tmle}}^{\mathrm{prag},*}$ of KRAS-guided treatment rules in the synthetic PRIME study application.}
\resizebox{\textwidth}{!}{%
\begin{tabular}{cccccccc}
\toprule
$V$ &  Estimate & SE & 95\% CI & Disc. & \(\theta(\doptv)\) & \(\theta(\doptvr)\) \\
\midrule
$V_1 = \emptyset$  &  12.61 & 10.56 & (-8.08, 33.31) & 0.522 & 423.58 & 436.19 \\
$V_2 = \{\mathrm{region, age, ECOG}\}$ & 8.21 & 6.55 & (-4.63, 21.06) & 0.209 & 432.81 & 441.02 \\
\bottomrule
\end{tabular}%
}
\label{tab:application_prag}
\end{table}

Overall, the application illustrates the distinction between explanatory and pragmatic effectiveness. KRAS status showed substantial explanatory effectiveness, indicating meaningful treatment-effect heterogeneity associated with the biomarker. However, the estimated pragmatic effectiveness was more modest. This reflects the decision-theoretic interpretation of the proposed estimand: even when a diagnostic test substantially explains treatment-effect heterogeneity, its pragmatic value depends on whether the additional information changes treatment recommendations in a way that improves expected outcomes. The pragmatic estimand therefore complements explanatory effectiveness by providing decision-relevant evidence on the actual outcome benefit gained from using the diagnostic test to support individualized treatment decision-making and optimization.

\section{Application to Evaluation of AI-Enabled Devices and Real-World Evidence}
\label{section:discussion}

The proposed methodology provides a principled causal lens for evaluating AI-enabled devices, including both diagnostic information tools and action-oriented decision-support tools. First, an AI-enabled device may generate an additional diagnostic, prognostic, or risk
signal that enriches the information available for treatment decisions. Second,
it may directly assist with, recommend, or optimize the treatment decision rule
itself. To make this distinction explicit, we introduce a two-dimensional approach that separates information augmentation from treatment-rule optimization as two pathways for supporting clinical decision-making, and then use this approach to inform the evaluation of AI-enabled devices.

\subsection{Distinguishing information augmentation and treatment-rule optimization}

The main pragmatic effectiveness estimand in this paper focuses on the value of
diagnostic information under optimized personalized treatment rules, namely the
contrast between \(\theta(\doptvr)\) and \(\theta(\doptv)\). This contrast isolates
the gain from enriching the information set from \(V\) to \((V,R)\) when the
rule-generating principle is held fixed as outcome-optimal treatment
personalization. More generally, the same rule-comparison approach can be used to evaluate other clinically relevant treatment-rule contrasts.
In practice, treatment decisions are often governed by current clinical
practice, guideline-based recommendations, physician judgment, or generated by some baseline
algorithms that are not necessarily fully optimized for the outcome of
interest. We refer to such benchmark treatment decision mechanisms as \emph{reference
rules}. Let \(\drefv:\cV\to\cA\) and
\(\drefvr:\cV\times\cR\to\cA\) denote reference treatment rules under the
reference information \(V\) and the enriched information \((V,R)\), respectively.
These rules play the role of clinically meaningful comparators, whereas
\(\doptv\) and \(\doptvr\) denote the corresponding optimized rules that maximize
expected outcomes under the two information sets. Together, these four rules
induce the following \(2\times 2\) information-by-decision value structure:

\begin{table}[H]
\centering
\caption{Two-dimensional approach for information augmentation and treatment-rule optimization.}
\label{tab:decision_value_approach}
\[
\begin{array}{c|cc}
& \text{Reference information } V
& \text{Enriched information } (V,R) \\
\hline
\text{Reference rules}
& \theta(\drefv)
& \theta(\drefvr) \\
\text{Optimized rules}
& \theta(\doptv)
& \theta(\doptvr)
\end{array}
\]
\end{table}

This approach organizes potential improvements in outcomes along two conceptual
dimensions. The first dimension is information augmentation: the gain obtained
when treatment decisions are allowed to use enriched information \((V,R)\)
instead of reference information \(V\) alone. The second dimension is
treatment-rule optimization: the gain obtained by replacing a reference rule with
an optimized rule while holding the information set fixed.

For example, the total improvement obtained by moving from baseline reference
practice \(\drefv\) to optimized decision-making with enriched information
\(\doptvr\) can be decomposed as
\[
\theta(\doptvr) - \theta(\drefv)
=
\underbrace{
\big[\theta(\doptv) - \theta(\drefv)\big]
}_{
\substack{
\text{treatment-rule optimization}\\
\text{using reference information}
}
}
+
\underbrace{
\big[\theta(\doptvr) - \theta(\doptv)\big]
}_{
\substack{
\text{information augmentation}\\
\text{under optimized rules}
}
},
\]
or equivalently,
\[
\theta(\doptvr) - \theta(\drefv)
=
\underbrace{
\big[\theta(\drefvr) - \theta(\drefv)\big]
}_{
\substack{
\text{information augmentation}\\
\text{under reference rules}
}
}
+
\underbrace{
\big[\theta(\doptvr) - \theta(\drefvr)\big]
}_{
\substack{
\text{treatment-rule optimization}\\
\text{using enriched information}
}
}.
\]

These decompositions support interpretation of the two pathways through which diagnostic information influences expected outcomes. The
first path asks how much is gained by first optimizing the treatment rule using
reference information and then adding the diagnostic test result to an already
optimized decision rule. The second path asks how much is gained by first
allowing the reference rule to use the diagnostic test result and then optimizing
the rule under the enriched information set. 

While the structure above is presented in terms of deterministic treatment
rules, the same logic also applies to stochastic decision mechanisms. For
example, under usual care, patients with the same reference information \(V\)
may receive different treatments because of variation across physicians,
institutions, patient preferences, or local practice patterns. Similarly, after
the diagnostic test result is observed, patients with the same enriched
information \((V,R)\) may still receive different treatments because clinicians
may interpret or act on the information differently. Such mechanisms can be represented by stochastic policies
$\pi^v:\mathcal A\times\mathcal V\to [0,1]$, and $\pi^{v,r}:\mathcal A\times\mathcal V\times\mathcal R\to [0,1],$
where \(\pi^v(a\mid v)\) and \(\pi^{v,r}(a\mid v,r)\) denote the probability
of assigning treatment \(a\) conditional on the available information \(v\) or
\((v,r)\), respectively. 
The value-comparison formulation therefore extends naturally from deterministic
rules to stochastic decision mechanisms. The TMLE and CV-TMLE procedures can be
adapted to such stochastic-policy contrasts by replacing the deterministic rule
indicators in the target estimand and clever covariate with the corresponding
stochastic policy probabilities (see, for example, \citet{munoz2012population,diaz2018stochastic}).

\subsection{Evaluating Dual Roles of AI-Enabled Devices as Diagnostic and Decision-Optimization Tools}

This information-augmentation-by-treatment-optimization approach provides a natural language for
evaluating AI-enabled clinical decision support. Information-oriented AI tools,
such as diagnostic, prognostic, biomarker, imaging, or risk models, can be
viewed as generating an enriched signal \(R_{\mathrm{AI}}\) that augments the
information available for treatment decisions. In this case, the explanatory
estimand asks whether \(R_{\mathrm{AI}}\) captures treatment-effect
heterogeneity beyond reference information, while the pragmatic estimand asks
whether incorporating this signal into treatment decisions improves expected
outcomes. 
Action-oriented AI tools, such as systems supporting treatment selection, dosing, or personalization, can instead be viewed as modifying the rule used to translate available information into treatment decisions, for example from a reference rule \(d_{\mathrm{ref}}\) to an AI-supported rule \(d_{\mathrm{AI}}\). Their value can therefore be evaluated through the expected outcome under the AI-supported rule, relative to the reference rule. 

This distinction naturally extends to more comprehensive AI-enabled workflows in which both components are present: an AI-generated signal enriches the information available for clinical decision-making, and an AI-enabled rule uses that information to support or recommend treatment decisions. 
The value contrasts in Table~\ref{tab:decision_value_approach}, together with the corresponding value decompositions, can then be used to evaluate whether an AI-enabled device improves outcomes through the information it provides, the treatment decisions it supports, or their combination.

The perspective on the dual roles of AI-enabled devices can be considered alongside recent discussions of transparency for artificial-intelligence/machine-learning-enabled medical devices, which highlight the importance of understanding how such devices provide information and interact with clinical decision-making in health care workflows \citep{fda2024transparency,shick2024transparency}.
This causal inference approach may provide a quantitative basis for characterizing evaluation targets in the context of planned modifications as described in current guidance \citep{fda2025pccp}.
In particular, planned modifications that affect earlier detection or diagnosis, personalized diagnostics and therapeutics, or assistive decision-support functions can be interpreted according to whether they primarily change the information available for treatment decisions, the rule used to translate information into treatment recommendations, or both.

\section{Conclusions}
This paper develops a causal inference approach for evaluating diagnostic tests and AI-enabled clinical decision-support tools, with a particular focus on settings in which their effects on health outcomes operate through information and downstream treatment decisions. Unlike classical interventions such as drugs and therapeutic devices, diagnostic tests, biomarkers, and information-oriented AI/ML tools generally do not directly affect health outcomes. Their value depends on whether the information they provide changes downstream treatment decisions in ways that improve expected outcomes. This decision-mediated pathway makes diagnostic-test evaluation a distinct causal problem from the evaluation of treatments or interventions whose effects are assessed through direct contrasts in health outcomes. 

The proposed approach separates two questions that are often conflated: explanatory effectiveness and pragmatic effectiveness. Explanatory effectiveness evaluates whether a diagnostic signal provides
additional treatment-relevant information by explaining treatment-effect
heterogeneity beyond baseline covariates of interest. Pragmatic effectiveness evaluates whether incorporating diagnostic information in treatment decision-making improves expected outcomes. This distinction clarifies why evidence of effect modification alone does not establish clinical utility: a diagnostic signal may be predictive of treatment benefit, yet have limited value if it does not change treatment recommendations or if the resulting changes yield little outcome improvement.

We characterize the explanatory effectiveness using a variance-based treatment-effect variable importance measure, which summarizes the additional variation in conditional treatment effects explained by the diagnostic signal beyond baseline covariates of interest. We formalize the pragmatic effectiveness as a treatment-rule value contrast under different information sets and provide TMLE and CV-TMLE procedures for nonparametric estimation and inference. The simulation study supported the finite-sample performance of the proposed estimators, and the application to synthetic, PRIME-like clinical trial data illustrated the practical distinction between explanatory and pragmatic effectiveness. In particular, the application showed how a biomarker may carry substantial treatment-relevant information while producing a more modest gain when incorporated into optimized treatment decisions.

Our approach provides a unified causal lens for evaluating AI/ML-enabled clinical decision-support devices across broad application domains. Information-oriented tools, such as diagnostic or risk-monitoring models, can be evaluated by the treatment-effect heterogeneity they explain and by the downstream value of using that information in treatment decisions. This approach also naturally accommodates the evaluation of action-oriented tools, such as devices supporting treatment selection or personalization, by quantifying the improvement in expected outcomes achieved through the updated treatment rules embedded in or supported by such devices.

With a unified information-augmentation-by-treatment-optimization approach, this work provides a statistical foundation for formulating causal estimands that reflect different evaluation perspectives for diagnostic tests and AI-enabled devices in a nonparametric model, and for obtaining evidence from sources such as clinical trials and real-world data, without relying on parametric assumptions. This methodology may support application in domains such as health monitoring, clinical care delivery, personalized medicine, AI-deployment assessment, and other decision-making settings in which both enriched information and optimized decision rules contribute to improvements in target outcomes.

\bibliographystyle{plainnatbookauthor}
\bibliography{main}

\appendix
\section*{Appendix}
\section{Asymptotics of the CV-TMLE}
\label{appendix:cvtmle_theory}

We provide a specialization of the CV-TMLE representation and asymptotics in Theorem 6 of \citet{van2015targeted} to the pragmatic-effectiveness contrast considered
in this paper and verify that
Assumptions~\ref{assump:cvtmle_positivity}--%
\ref{assump:cvtmle_second_order}
imply the required conditions.

Recall that for each fold $j$, the fold-specific estimated EIC is 
$
D_{n,j}^{\mathrm{prag},*}
:=
D_{\donevnminusj,\dtwovrnminusj}^{\mathrm{prag}}
\left(
Q_{n,j}^{*},g_{n,j}
\right)$.
By construction of the pooled targeting step,
$
\frac{1}{J}
\sum_{j=1}^{J}
P_{n,j}
D_{n,j}^{\mathrm{prag},*}
=
0$.

For each fold $j$, define the fixed-rule second-order remainder
\begin{align}
\operatorname{Rem}_{n,j}^{\mathrm{prag}}
&:=
\psi_{\donevnminusj,\dtwovrnminusj}^{\mathrm{prag}}
\left(
Q_{n,j}^{*}
\right)
-
\psi_{\donevnminusj,\dtwovrnminusj}^{\mathrm{prag}}
\left(
Q_0
\right)
+
P_0
D_{n,j}^{\mathrm{prag},*}.
\label{eq:cvtmle_rem_appendix}
\end{align}

\begin{theorem}[Representation of CV-TMLE]
\label{thm:cvtmle_representation_appendix}
Suppose that,
$
\max_{j\in\{1,\ldots,J\}}
\left|
D_{n,j}^{\mathrm{prag},*}
\right| \leq M
$
for some $M<\infty$ with probability tending to one.
Suppose further that there exist fixed limiting treatment rules
$d_{1,\infty}^{v}$ and $d_{2,\infty}^{v,r}$,
a possibly misspecified limiting distribution $Q_{\infty}$,
and a limiting treatment mechanism $g_{\infty}$ such that
\[
\max_{j\in\{1,\ldots,J\}}
P_0
\left[
\left\{
D_{n,j}^{\mathrm{prag},*}
-
D_{\infty}^{\mathrm{prag}}
\right\}^{2}
\right]
=
o_p(1),
\]
where
$
D_{\infty}^{\mathrm{prag}}
:=
D_{d_{1,\infty}^{v},d_{2,\infty}^{v,r}}^{\mathrm{prag}}
\left(
Q_{\infty},g_{\infty}
\right)$.
Then,
\begin{align}
\psi_{n,\mathrm{cv-tmle}}^{\mathrm{prag},*}
-
\tilde{\psi}_{n}^{\mathrm{prag}}(Q_0)
=
(P_n-P_0)
D_{\infty}^{\mathrm{prag}}
+
\frac{1}{J}
\sum_{j=1}^{J}
\operatorname{Rem}_{n,j}^{\mathrm{prag}}
+
o_p
\left(
n^{-1/2}
\right).
\label{eq:cvtmle_representation_appendix}
\end{align}
\end{theorem}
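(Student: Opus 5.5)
The argument is an algebraic decomposition that gives an exact identity. After that, a single empirical-process term has to be controlled using the cross-fitting structure.

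\textbf{Step 1 (exact identity).} Fix a fold $j$. By the definition of $\operatorname{Rem}_{n,j}^{\mathrm{prag}}$ in \eqref{eq:cvtmle_rem_appendix},
\[
\psi^{\mathrm{prag}}_{\donevnminusj,\dtwovrnminusj}(Q_{n,j}^{*})-\psi^{\mathrm{prag}}_{\donevnminusj,\dtwovrnminusj}(Q_0)
= -P_0D_{n,j}^{\mathrm{prag},*}+\operatorname{Rem}_{n,j}^{\mathrm{prag}}.
\]
The pooled targeting step gives $\frac1J\sum_j P_{n,j}D_{n,j}^{\mathrm{prag},*}=0$. Averaging over $j$ and adding this zero term, we obtain
\[
\psi_{n,\mathrm{cv-tmle}}^{\mathrm{prag},*}-\tilde\psi_n^{\mathrm{prag}}(Q_0)
=\frac1J\sum_{j=1}^J (P_{n,j}-P_0)D_{n,j}^{\mathrm{prag},*}+\frac1J\sum_{j=1}^J\operatorname{Rem}_{n,j}^{\mathrm{prag}}.
\]
Here we use that the plug-in part of the fold-specific estimated EIC, namely $\bQ^*_{n,j}(\dtwovrnminusj,\cdot)-\bQ^*_{n,j}(\donevnminusj,\cdot)-\psi(Q^*_{n,j})$, has $P_{n,j}$-mean zero by construction. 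This holds because $Q_{n,j,W,R}$ is the validation-fold empirical distribution. As a consequence, the targeting equation is exactly $\frac1J\sum_j P_{n,j}D^{\mathrm{prag},*}_{n,j}=0$.

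\textbf{Step 2 (isolating the limit).} Write
\[
(P_{n,j}-P_0)D_{n,j}^{\mathrm{prag},*}=(P_{n,j}-P_0)D_\infty^{\mathrm{prag}}+(P_{n,j}-P_0)\{D_{n,j}^{\mathrm{prag},*}-D_\infty^{\mathrm{prag}}\}.
\]
The folds have equal size $n/J$, so $\frac1J\sum_jP_{n,j}=P_n$. Hence the first pieces sum to $(P_n-P_0)D_\infty^{\mathrm{prag}}$, which is the leading term in \eqref{eq:cvtmle_representation_appendix}. It remains to show that each second piece is $o_p(n^{-1/2})$.

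\textbf{Step 3 (main obstacle: the empirical-process term).} The difficulty is that $D_{n,j}^{\mathrm{prag},*}$ is not a fixed function. Its ingredients $\bQ_{n,j}$, $g_{n,j}$ and the rules are fit on $\Train_j$, but it also depends on the pooled fluctuation $\epsilon_{n,\mathrm{cv}}$, which uses validation data. I would handle this by writing $D_{n,j}^{\mathrm{prag},*}=f_{n,j,\epsilon}$ with $\epsilon=\epsilon_{n,\mathrm{cv}}$, where the map $\epsilon\mapsto f_{n,j,\epsilon}$ is determined by the training sample alone.

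Conditionally on $\Train_j$, the class $\{f_{n,j,\epsilon}-D_\infty^{\mathrm{prag}}:|\epsilon|\le C\}$ is a one-dimensional Lipschitz-in-$\epsilon$ parametric class. It is uniformly bounded by $M+\|D_\infty^{\mathrm{prag}}\|_\infty$ with probability tending to one, using the boundedness hypothesis and the strong positivity that makes $H_{n,j}$ bounded. The logistic link has bounded derivative, so a bracketing entropy integral gives a maximal inequality for $\sqrt{n/J}\,(P_{n,j}-P_0)$ over the class. That inequality is controlled by the $L^2(P_0)$ diameter.

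By the $L^2$ hypothesis, $P_0\{f_{n,j,\epsilon_{n,\mathrm{cv}}}-D_\infty^{\mathrm{prag}}\}^2=o_p(1)$. I would then use asymptotic equicontinuity, in the form of van der Vaart–Wellner Lemma 19.24 applied conditionally and then integrated over the training data, to conclude that $(P_{n,j}-P_0)\{D_{n,j}^{\mathrm{prag},*}-D_\infty^{\mathrm{prag}}\}=o_p(n^{-1/2})$. Two side facts are needed: that $\epsilon_{n,\mathrm{cv}}$ is $O_p(1)$, and the conditional-to-unconditional passage via dominated convergence.

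A simpler fallback is available when $\epsilon$ is treated as converging to a fixed $\epsilon_0$. In that case one substitutes $f_{n,j,\epsilon_0}$, which is a fixed function given $\Train_j$, and applies Chebyshev conditionally. This gives a term of order $O_p\big((P_0(f-D_\infty)^2)^{1/2}n^{-1/2}\big)=o_p(n^{-1/2})$. The Lipschitz dependence on $\epsilon$ then controls the difference $f_{n,j,\epsilon_{n,\mathrm{cv}}}-f_{n,j,\epsilon_0}$.

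Summing over the finite number $J$ of folds completes the proof.
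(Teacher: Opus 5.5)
This is correct and is essentially the paper's proof: your Steps 1--2 reproduce its decomposition exactly (with a more explicit justification of $\frac1J\sum_j P_{n,j}D_{n,j}^{\mathrm{prag},*}=0$ via the $P_{n,j}$-mean-zero plug-in part), and your Step 3 unpacks the cross-validated empirical-process argument that the paper only cites from \citet{van2015targeted}, correctly flagging that $D_{n,j}^{\mathrm{prag},*}$ depends on the validation data through $\epsilon_{n,\mathrm{cv}}$. Two refinements: the side condition $\epsilon_{n,\mathrm{cv}}=O_p(1)$ does not follow from the hypotheses, but you do not need it, because conditionally on $\Train_j$ the class $\{f_{n,j,\epsilon}:\epsilon\in\mathbb{R}\}$ is built from $\mathrm{expit}$ of one-dimensional linear families, hence is VC-type with $n$-free entropy bounds and, under the strong positivity you invoke, has a bounded envelope, so a local uniform-entropy maximal inequality suffices; and your ``fallback'' does not close as written, since pointwise Lipschitz control (constant $L$) bounds $(P_{n,j}-P_0)(f_{n,j,\epsilon_{n,\mathrm{cv}}}-f_{n,j,\epsilon_0})$ only by $2L|\epsilon_{n,\mathrm{cv}}-\epsilon_0|$, which is not $o_p(n^{-1/2})$ without a rate on $\epsilon_{n,\mathrm{cv}}$.
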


\begin{proof}
For each fold $j$,
\[
\psi_{\donevnminusj,\dtwovrnminusj}^{\mathrm{prag}}
\left(
Q_{n,j}^{*}
\right)
-
\psi_{\donevnminusj,\dtwovrnminusj}^{\mathrm{prag}}
\left(
Q_0
\right)
=
-
P_0D_{n,j}^{\mathrm{prag},*}
+
\operatorname{Rem}_{n,j}^{\mathrm{prag}}.
\]
Averaging across folds and adding and subtracting the corresponding empirical means over the validation sets gives
\begin{align}
\psi_{n,\mathrm{cv-tmle}}^{\mathrm{prag},*}
-
\tilde{\psi}_{n}^{\mathrm{prag}}(Q_0)
&=
\frac{1}{J}
\sum_{j=1}^{J}
\left(
P_{n,j}-P_0
\right)
D_{n,j}^{\mathrm{prag},*}
-
\frac{1}{J}
\sum_{j=1}^{J}
P_{n,j}
D_{n,j}^{\mathrm{prag},*}
+
\frac{1}{J}
\sum_{j=1}^{J}
\operatorname{Rem}_{n,j}^{\mathrm{prag}}.
\end{align}
The second term on the right-hand side is zero due to the targeting
step that solves \linebreak $
\frac{1}{J}
\sum_{j=1}^{J}
P_{n,j}
D_{n,j}^{\mathrm{prag},*}
=
0$. Furthermore,
\begin{align}
&\frac{1}{J}
\sum_{j=1}^{J}
\left(
P_{n,j}-P_0
\right)
D_{n,j}^{\mathrm{prag},*}
\nonumber
=
(P_n-P_0)
D_{\infty}^{\mathrm{prag}}
+
\frac{1}{J}
\sum_{j=1}^{J}
\left(
P_{n,j}-P_0
\right)
\left(
D_{n,j}^{\mathrm{prag},*}
-
D_{\infty}^{\mathrm{prag}}
\right).
\end{align}
By the cross-validated empirical-process argument of
\citet{van2015targeted}, the uniform boundedness of
$D_{n,j}^{\mathrm{prag},*}$ together with its $L_2(P_0)$ convergence to
$D_{\infty}^{\mathrm{prag}}$ implies
\[
\frac{1}{J}
\sum_{j=1}^{J}
\left(
P_{n,j}-P_0
\right)
\left(
D_{n,j}^{\mathrm{prag},*}
-
D_{\infty}^{\mathrm{prag}}
\right)
=
o_p
\left(
n^{-1/2}
\right).
\]
Combining these results establishes
\eqref{eq:cvtmle_representation_appendix}.
\end{proof}

\begin{proof}[Proof of Theorem \ref{thm:cvtmle_pragmatic_an}]
We verify that
Assumptions~\ref{assump:cvtmle_positivity}--%
\ref{assump:cvtmle_second_order}
imply the conditions of
Theorem~\ref{thm:cvtmle_representation_appendix}
and that the remaining fixed-rule second-order remainder is
asymptotically negligible.

First, recall from the statistical setup that $Y\in[0,1]$.
The logistic fluctuation therefore preserves
$\bQ_{n,j}^{*}\in[0,1]$.
Under Assumption~\ref{assump:cvtmle_positivity},
$
\left|
\frac{
I\left(A=\dtwovrnminusj(V,R)\right)
-
I\left(A=\donevnminusj(V)\right)
}{
g_{n,j}(A\mid W,R)
}
\right|
\leq
\delta^{-1}
$
with probability tending to one. Since
$
\left|
Y-\bQ_{n,j}^{*}(A,W,R)
\right|
\leq 1$,
and both the conditional-mean contrast and its corresponding
fold-specific value are bounded in absolute value by one, it follows that
$
\max_{j\in\{1,\ldots,J\}}
\left\|
D_{n,j}^{\mathrm{prag},*}
\right\|_{\infty}
\leq
\delta^{-1}+2$
with probability tending to one. Thus, the uniform boundedness condition
of Theorem~\ref{thm:cvtmle_representation_appendix} is satisfied.

Second,
Assumption~\ref{assump:cvtmle_eic_convergence} is the
$L_2(P_0)$ convergence condition required by
Theorem~\ref{thm:cvtmle_representation_appendix}, namely,
$
\max_{j\in\{1,\ldots,J\}}
P_0
\left[
\left(
D_{n,j}^{\mathrm{prag},*}
-
D_{\infty}^{\mathrm{prag}}
\right)^{2}
\right]
=
o_p(1).
$

Finally, for any fixed pair of treatment rules
$(d_1^v,d_2^{v,r})$, define the second-order
remainder associated with the pragmatic-effectiveness estimand, relative to the true distribution $P_0$, by
\begin{align}
\operatorname{Rem}^{\mathrm{prag}}_{d_1^v,d_2^{v,r}}(Q,g)
&:=
\psi^{\mathrm{prag}}_{d_1^v,d_2^{v,r}}(Q)
-
\psi^{\mathrm{prag}}_{d_1^v,d_2^{v,r}}(Q_0)
+
P_0
D^{\mathrm{prag}}_{d_1^v,d_2^{v,r}}(Q,g).
\end{align}
% Finally, for any fixed pair of treatment rules
% $(d_1^v,d_2^{v,r})$, the second-order remainder for the pragmatic effectiveness is
For this estimand, the remainder can be written explicitly as
\begin{align}
\operatorname{Rem}^{\mathrm{prag}}_{d_1^v,d_2^{v,r}}(Q,g)
&=
P_0
\left[
\left\{
\frac{
g_0\left(d_2^{v,r}(V,R)\mid W,R\right)-g\left(d_2^{v,r}(V,R)\mid W,R\right)
}{
g\left(d_2^{v,r}(V,R)\mid W,R\right)
}
\right\}
\right.
\nonumber\\[-0.25em]
&\qquad\left.
\times
\left\{
\bQ_0\left(d_2^{v,r}(V,R),W,R\right)
-
\bQ\left(d_2^{v,r}(V,R),W,R\right)
\right\}
\right]
\nonumber\\
&\quad-
P_0
\left[
\left\{
\frac{
g_0\left(d_1^v(V)\mid W,R\right)-g\left(d_1^v(V)\mid W,R\right)
}{
g\left(d_1^v(V)\mid W,R\right)
}
\right\}
\right.
\nonumber\\[-0.25em]
&\qquad\left.
\times
\left\{
\bQ_0\left(d_1^v(V),W,R\right)
-
\bQ\left(d_1^v(V),W,R\right)
\right\}
\right].
\label{eq:cvtmle_remainder_explicit}
\end{align}
In particular,
$
\operatorname{Rem}_{n,j}^{\mathrm{prag}}
=
\operatorname{Rem}_{\donevnminusj,\dtwovrnminusj}^{\mathrm{prag}}
\left(
Q_{n,j}^{*},
g_{n,j}
\right)$.
Under Assumption~\ref{assump:cvtmle_positivity}, the
Cauchy--Schwarz inequality therefore gives, for some finite constant
$C<\infty$,
\[
\left|
\operatorname{Rem}_{n,j}^{\mathrm{prag}}
\right|
\leq
C
\left\|
\bQ_{n,j}^{*}-\bQ_0
\right\|_{P_0,2}
\left\|
g_{n,j}-g_0
\right\|_{P_0,2}.
\]
Therefore, it follows from
Assumption~\ref{assump:cvtmle_second_order} that
$
\frac{1}{J}
\sum_{j=1}^{J}
\operatorname{Rem}_{n,j}^{\mathrm{prag}}
=
o_p
\left(
n^{-1/2}
\right).
$

Since the targeting step satisfies
$
\frac{1}{J}
\sum_{j=1}^{J}
P_{n,j}
D_{n,j}^{\mathrm{prag},*}
=
0$,
Theorem~\ref{thm:cvtmle_representation_appendix} consequently yields
\[
\psi_{n,\mathrm{cv-tmle}}^{\mathrm{prag},*}
-
\tilde{\psi}_{n}^{\mathrm{prag}}(Q_0)
=
(P_n-P_0)
D_{\infty}^{\mathrm{prag}}
+
o_p
\left(
n^{-1/2}
\right).
\]

Hence, by the central limit theorem,
$
\sqrt{n}
\left(
P_n-P_0
\right)
D_{\infty}^{\mathrm{prag}}
\dto
N
\left(
0,\sigma_{\infty}^{2}
\right)$, where 
$
\sigma_{\infty}^{2}
=
\Var_{P_0}
\left\{
D_{\infty}^{\mathrm{prag}}(O)
\right\}.
$
\end{proof}

\end{document}